\def\calD{\mathcal{D}}
\def\calL{\mathcal{L}}
\def\calS{\mathcal{S}}
\def\calO{\mathcal{O}}
\def\calR{\mathcal{R}}
\newcommand{\R}{\mathbb{R}}
\newcommand{\N}{\mathbb{N}}
\newcommand{\Z}{\mathbb{Z}}
\begin{document}

\title{{Online hitting set of $d$-dimensional fat objects}}
\author{Shanli Alefkhani\inst{1}\and
Nima Khodaveisi\inst{1} \and
Mathieu Mari\inst{1,2}}

\authorrunning{S. Alefkhani et al.}
\institute{IDEAS-NCBR, Warsaw, Poland \and
University of Warsaw, Poland}

\maketitle
\begin{abstract}
    We consider an online version of the geometric minimum hitting set problem that can be described as a game between an adversary and an algorithm. For some integers $d$ and $N$, let $P$ be the set of points in $(0, N)^d$ with integral coordinates, and let $\calO$ be a family of subsets of $P$, called objects. Both $P$ and $\calO$ are known in advance by the algorithm and by the adversary. Then, the adversary gives some objects one by one, and the algorithm has to maintain a valid hitting set for these objects using points from $P$, with an immediate and irrevocable decision. We measure the performance of the algorithm by its competitive ratio, that is the ratio between the number of points used by the algorithm and the offline minimum hitting set for the sub-sequence of objects chosen by the adversary.
    
    We present a simple deterministic online algorithm with competitive ratio $((4\alpha+1)^{2d}\log N)$ when objects correspond to a family of $\alpha$-fat objects. Informally, $\alpha$-fatness measures how cube-like is an object. We show that no algorithm can achieve a better ratio when $\alpha$ and $d$ are fixed constants. In particular, our algorithm works for two-dimensional disks and $d$-cubes which answers two open questions from related previous papers in the special case where the set of points corresponds to all the points of integral coordinates with a fixed $d$-cube.

\keywords{Online algorithms  \and Minimum hitting set \and Euclidean Plane.}
\end{abstract}

\section{Introduction}
The hitting set problem is one of the fundamental problems in combinatorial optimization. Let $(X, \calR)$ be a range space where $X$ is a set of elements and $\calR$ is a family of subsets of $X$, $|X| = n, |\calR| = m$. A subset $H \subseteq X$ is a \emph{hitting set} for $\calR$ if and only if, for every range $R \in \calR$ the intersection of $H$ and $R$ is non-empty. 
In the offline setting, the goal is to find a hitting set of minimum size. Note that by interchanging the roles of subsets and elements, the hitting set problems is equivalent to the set cover problem. 
The hitting set problem is a classic NP-hard problem \cite{Karp72}, and the best approximation factor achievable in polynomial time (assuming $P \neq NP$) is $\Theta(\log n)$ \cite{Chvatal79,Feige98,Johnson74a,Lovasz75}. 

There is a line of work that considered the hitting set problem in a geometrical setting. The set of elements $X$ is a subset of \emph{points} of the $d$-dimensional plane $\R^d$ and $\calR$ corresponds to a family of geometrical objects\footnote{to simplify, we consider that a subset of points $D\subseteq X\subseteq \R^2$ is a disk (or a square, or another type of geometric object) if there exists a disk $D'\subset \R^2$ such that $D'\cap X=D$. This allows us to consider a subset of $X$ as a geometrical object. }, e.g., disks, squares, rectangles, etc., for $d=2$. The hitting set problem remains NP-hard even for simple geometric objects like unit disks or unit squares in $\R^2$ \cite{FowlerPT81}. For some families of geometric objects, there are better approximation ratios than for the general case, e.g., a PTAS for axis-parallel squares and disks \cite{MustafaR09}, and more generally for fat objects in a fixed dimension \cite{CHAN2003}.

In this paper, we consider an \emph{online} version of the problem. It is convenient to define this problem as a game between an adversary and an algorithm. Initially, a range space $(X,\calR)$ is known in advance by both the algorithm and the adversary. The game consists of a series of turns until the adversary decides to stop the game. In each turn, the adversary gives a subset $R\in \calR$, and the algorithm has to choose a point $p\in X$, such that $p\in R$, if none of the points previously chosen by the algorithm are contained in $R$. The algorithm is allowed to select several points during the same turn and may decide to select new points even though the current subset $S$ is already hit by one of its previous points. The goal of the algorithm is to minimize the total number of points selected at the end of the game. 
See Figure \ref{fig:online hitting set} for an illustration of this game. 
We measure the performance of an algorithm by its \emph{competitive ratio}, which corresponds to the ratio between the number of points selected by the algorithm and the minimum size of a hitting set of the sub-family of $\calR$ given by the adversary during the game. 

\begin{figure}[h!]
    \centering
    \includegraphics[page=1, width=0.5\textwidth]{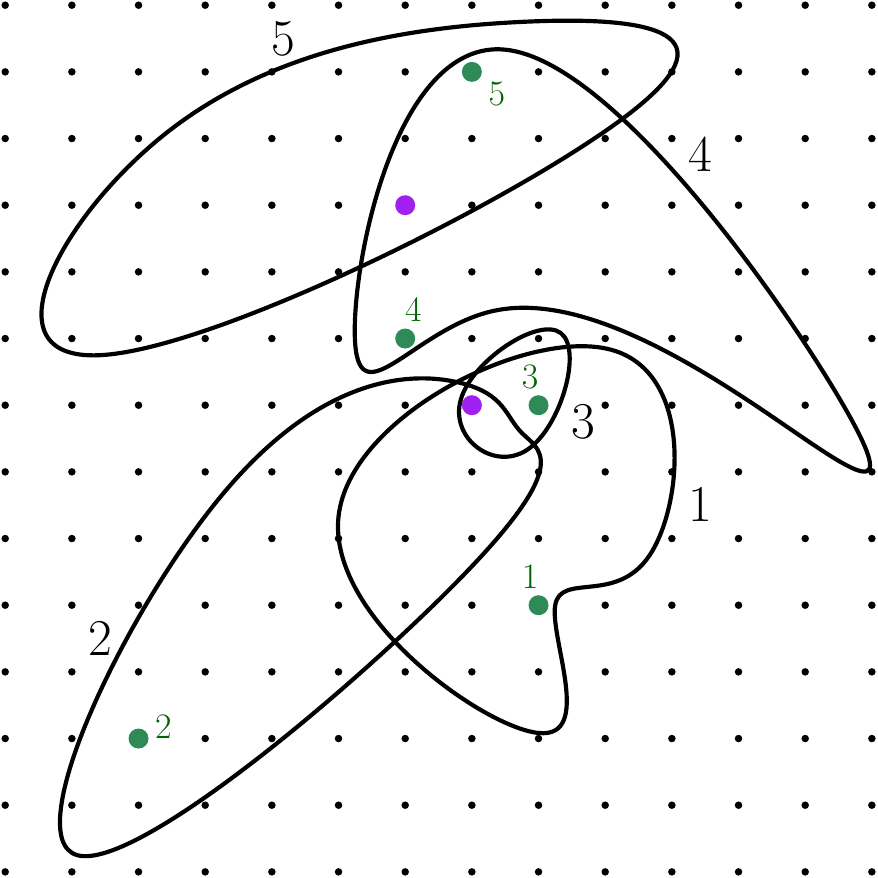}
    \caption{Here, the set of points $X$ corresponds to the points of integral coordinates of the Euclidean plane (small black dots). During the game, the adversary has given 5 subsets. In each turn, the algorithm has chosen one point (green) that is contained in the object given by the adversary. In total, the algorithm has used 5 points while the offline minimum hitting set consists of only two points (purple). }
    \label{fig:online hitting set}
\end{figure}

In the online setting, Alon et al. \cite{AlonAABN09} introduced an (essentially) tight $O(\log n\log m)$-competitive algorithm for the general case. There are also a few works that considered the special case of geometrical objects. 
 Even et al. \cite{EvenS11} presented an $O(\log n)$-competitive algorithm for intervals ($d = 1$), for unit-disks ($d=2$), and later for half-planes in dimension two \cite{EvenS14}. Khan et al. \cite{Khan+23} presented $O(\log N)$-competitive algorithm for axis-parallel squares of arbitrary sizes, assuming that all points have integral coordinates in $[0, N)^2$. 
 De et al. \cite{DeS22} looked at the problem in dimension $d$, when the algorithm is allowed to use any point of integral coordinates. They showed an $O(d^2)$-competitive algorithm for unit hypercubes and an $O(d^4)$-competitive algorithm for unit balls in dimension $d$. They also showed that any deterministic online algorithm for hypercubes has a competitive ratio of at least $d+1$.  Even and Smorodinsky also showed a lower bound of $\Omega(\log n)$ for intervals and arbitrary points \cite{EvenS11}. 
 
 In this paper, we are interested in two open questions mentioned in these papers:  
\begin{enumerate}
    \item Can one obtain an $o(\log^2 n)$-competitive algorithm for disks (dimension two)? \cite{EvenS14}
    \item Can one obtain an $o(\log n\log m)$-competitive algorithm for cubes (dimension three)? \cite{Khan+23}
\end{enumerate}

\subsection{Our contribution}
In this paper, we are interested in $d$-dimensional fat objects, that generalize disks, squares, hypercubes, etc. An object $O\in \R^d$ is $\alpha$-fat, for some $\alpha\ge 1$ if the ratio of the sizes of the smallest hypercube containing $O$ and the biggest hypercube contained in $O$ is at most $\alpha$. 

We answer the two open questions mentioned above in the case where $X$ corresponds to the set of points with integral coordinates that are contained in a fixed hypercube. More precisely, let $d$ and $N$ be some integers. Let $P=(0,N)^d\cap \Z^d$ be the set of points of integral coordinates that are contained in $(0, N)^d$. In particular $n=|P|=(N-1)^d$. Let $\calO$ be a family of subsets of $P$. 

\begin{theorem}
There is an $((4\alpha+1)^{2d} \log N)$-competitive algorithm for minimum hitting set on $(P,\calO)$ when $\calO$ corresponds to a family of $d$-dimensional $\alpha$-fat objects in $(0, N)^d$. 
\end{theorem}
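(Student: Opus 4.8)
The natural approach is a geometric divide-and-conquer based on a hierarchical grid decomposition of the cube $(0,N)^d$, combined with the standard online set-cover idea of "doubling" within each level. I would build a $2$-ary quadtree-like partition: at level $i$ (for $i = 0, 1, \dots, \log N$), partition $(0,N)^d$ into axis-parallel subcubes of side length roughly $N/2^i$. Each incoming object $O$, being $\alpha$-fat, contains a hypercube of side $s$ and is contained in one of side $\alpha s$; I assign $O$ to the unique level $i(O)$ at which the grid cell side length is comparable to $s$ — specifically the largest $i$ such that $N/2^i \ge \alpha s$, so that $O$ fits inside a constant-sized ($O(\alpha)$) patch of grid cells at that level. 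Since $O$ also contains a cell-sized hypercube at that level, $O$ fully contains at least one grid point from some nearby cell, and more importantly the number of grid cells of level $i(O)$ that $O$ intersects is at most $(4\alpha+1)^d$ or so (each coordinate spans at most $4\alpha+1$ cells once one accounts for the containing cube of side $\alpha s \le N/2^{i-1}$).

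The algorithm then runs, essentially independently, one copy of a simple online procedure per grid cell per level. Within a fixed cell $C$ at level $i$, when an object $O$ assigned to level $i$ arrives that is not yet hit and intersects $C$, the algorithm picks a representative point of $C$ (say its "center" lattice point, or an arbitrary fixed lattice point in $C$) and adds it — but only if $O$ actually contains that point. The key structural fact to establish is a \emph{charging/packing lemma}: if \opt\ uses $k$ points total, then within each level $i$, the objects assigned to that level that \opt\ hits with a single point $p$ can force the algorithm to pay only $O((4\alpha+1)^d)$ points at that level near $p$ — because any object at level $i$ containing $p$ lives in the $O(\alpha)$-neighborhood of the cell of $p$, so the algorithm's points charged to $p$ at level $i$ all come from a bounded set of cells. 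Summing over the $\log N$ levels gives the competitive ratio $(4\alpha+1)^{2d}\log N$, where one factor of $(4\alpha+1)^d$ comes from the per-level local packing bound and the other from the number of cells an object touches (equivalently, from choosing which cell's representative to place).

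The main obstacle, and where care is needed, is the \emph{validity} requirement: the algorithm must place a point that is actually \emph{inside} the object $O$, not merely in a cell that $O$ intersects. An object may clip a cell $C$ without containing $C$'s designated representative point. The fix is to exploit fatness more carefully: at the assigned level $i(O)$, the inner hypercube of $O$ (side $s$ with $N/2^i \le \alpha s \cdot \text{const}$, i.e. $s \ge N/(2^i \cdot O(\alpha))$) contains a sub-grid of lattice points at a \emph{finer} sublevel, or equivalently $O$ fully contains at least one entire cell of a slightly finer grid (level $i + \lceil\log(4\alpha+1)\rceil$ or so). So I would actually index the algorithm's decisions by cells of this finer grid: $O$ is guaranteed to fully contain some finer cell, hence all lattice points in it; the algorithm places the representative of one such fully-contained finer cell. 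This keeps validity while only inflating the constants to the stated $(4\alpha+1)^{2d}$. I would then verify the two counting bounds — (i) the number of finer-grid cells fully inside the $O(\alpha)$-patch is $O((4\alpha+1)^d)$, and (ii) each such finer cell's representative, once placed, hits all future objects at that level whose inner cube contains it, so \opt's single point $p$ accounts for at most the finer cells in $p$'s $O(\alpha)$-neighborhood, again $O((4\alpha+1)^d)$ of them — multiply and sum over levels. Making the "largest $i$ with $N/2^i \ge \alpha s$" bookkeeping precise (including boundary effects at the cube's faces, handled by the open cube $(0,N)^d$ and integral coordinates) is the routine but delicate part.
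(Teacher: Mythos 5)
Your skeleton is the same as the paper's: dyadic scales giving $\lfloor\log N\rfloor$ levels, a per-object cost of at most $(4\alpha+1)^d$ placed points, and a pigeonhole charging argument showing that each point of the offline optimum can be blamed, per level, for at most $(4\alpha+1)^d$ objects that arrive unhit. The paper implements this by colouring each lattice point $x$ with $\ell(x)=\min_i\ell(x_i)$ (the largest power of two dividing every coordinate), defining the level of an object as the maximum colour of a lattice point it contains, and picking \emph{all} points of that colour when an unhit object arrives; your version instead determines the scale of an object from its in-width and works with grid cells. These are equivalent up to constants (Lemma~\ref{lemma:geo_at_least_d} is exactly the bridge between the two viewpoints), so the route is essentially the paper's, rephrased in cell language.

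There is, however, one step that fails as written: your validity fix via fully-contained finer-grid cells. You assign $O$ the largest level $i$ with $N/2^i\ge\alpha s$ ($s$ the in-width) and place the representative of a cell of the grid at level $i+\lceil\log(4\alpha+1)\rceil$ that is entirely contained in $O$. When $s$ is small --- concretely whenever $i+\lceil\log(4\alpha+1)\rceil>\log N$, i.e.\ in-width up to roughly $(4\alpha+1)/\alpha$ --- those finer cells have side less than $1$ and in general contain no point of $P$ at all, so ``the representative of a fully contained finer cell'' is not a lattice point and the placement rule is undefined; yet such objects are legitimate inputs (any nonempty subset of $P$ cut out by a small fat object) and are precisely the ones an adversary exploits, as in the paper's lower-bound construction which shrinks objects down to unit scale. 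The gap is repairable: at these finest levels let the algorithm place an arbitrary point of $P\cap O$; an unhit level-$i$ object containing an optimum point $p$ then has out-width at most $N/2^i<4\alpha+1$, so every point placed for it lies in a cube of side less than $2(4\alpha+1)$ centred at $p$, which contains only $O(\alpha^d)$ lattice points, and each unhit object forces a new one, so the per-level, per-$p$ bound survives. The paper's formulation avoids this case distinction altogether: since the level of an object is defined through the lattice points it actually contains, the algorithm only ever selects points that exist and lie inside the object, and Lemma~\ref{lemma:geo_at_least_d} together with fatness and Lemma~\ref{lemma:geo_at_most_d} yields both counting bounds uniformly at all scales. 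Finally, your constants (e.g.\ neighbourhoods of side $O(\alpha)\cdot N/2^i$ divided into finer cells) come out slightly larger than $(4\alpha+1)^d$ per factor; they still fit under $(4\alpha+1)^{2d}\log N$ for $\alpha\ge1$, but matching the stated bound literally requires the bookkeeping you deferred.
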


Notice that disks are $\sqrt{2}$-fat and cubes are 1-fat. Thus, Theorem 1 settles both questions from the introduction in the affirmative.

This algorithm is  $O(\log n)$-competitive ratio for disks of arbitrary sizes in the $2$-dimensional plane, and $3$-dimensional cubes. 

Our algorithm works as follows. It associates to each point in $P$ a color in $\{0,\dots, \lfloor \log N \rfloor -1\}$. Then, when an object arrives, if it is already hit we do nothing, otherwise we pick all the points with the maximum color inside the object.
Our coloring guarantees that we add at most $(4\alpha+1)^d$ points in each step.  We show that for each color $l$, and each point $p$ in the offline solution, the adversary cannot give more than $(4\alpha+1)^d$ objects that are not already hit at their arrival time, that contain $p$ and are of level\footnote{The level of an object is the maximum color of the points contained inside it.} $l$. This will help us to prove our competitive ratio.

We also show a lower bound of $\Omega(\frac{\log N}{1 + \log \alpha})$ on the competitiveness of any algorithm for the problem. This implies that no algorithm can achieve a better ratio when $\alpha$ and $d$ are fixed constants.

\section{The algorithm}
In this section, we present our online algorithm for hitting set of $d$-dimensional $\alpha$-fat objects. We start with some useful definitions. Let $d$ and $N$ be two integers and $P$ be the set of points with integral coordinates in $(0, N)^d$. 
A $d$-cube is an axis-parallel $d$-dimensional hypercube. The \emph{width} of a $d$-cube is the length of any of its sides. For simplicity, we assume that all geometrical objects $O\subset \R^d$ considered in this paper are open sets. 

\begin{definition}[$\alpha$-fat]     
   Let $O\subset \R^d$. The \emph{in-width} of $O$ is the length of the largest $d$-cube contained in $O$. The \emph{out-width} of $O$ is the length of the smallest $d$-cube containing $O$. We say that $O$ is $\alpha$-\emph{fat}, for some $\alpha \geq 1$ if the ratio of its out-width over its in-width is at most $\alpha$. 
   \label{definition:fat}
\end{definition}

For instance $d$-disks are $O(\sqrt{d})$-fat. Also, notice that $d$-cubes are $1$-fat.

\begin{definition}[level of a point]
   Let $i\in \N$, we denote $\ell(i)$ the maximum number $k$ such that $i$ is a multiple of $2^k$. For a point $x=(x_1,\dots, x_d)\in P$, we define the level of a point to be $\ell(x)=\min_{i=1}^d \ell(x_i)$. 
\end{definition}

See figure \ref{fig:coloring} for an illustration of the levels of the points of $P$. We denote $\calL=\{0,\dots, \lfloor\log N\rfloor - 1 \}$. It is clear that for each point $x\in P$, we have $\ell(x)\in \calL$. 

\begin{figure}[t]
    \centering
    \includegraphics[page=2, width=0.5\textwidth]{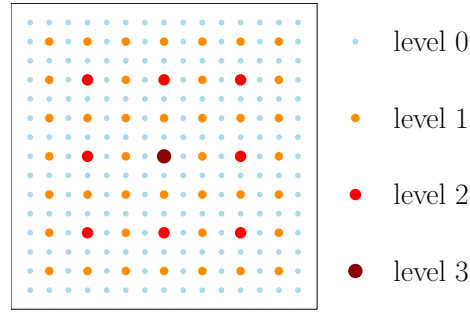}
    \caption{Levels of points in $P$ for $N=16$.}
    \label{fig:coloring}
\end{figure}

\begin{definition}[level of an object]
    For a geometric object $O \subseteq (0, N)^d$, we define its \emph{level} $\ell(O)$ as the maximum level over all the points in $O \cap P$. For each $l \in \calL$, we denote $n_l(O)$ as the number of points of level $l$ in $O \cap P$ and $n_{\ge l}(O)$ the number of points of level at least $l$ in $O \cap P$.
\end{definition}

We now describe the algorithm. We maintain a hitting set $P'$ that is initially empty.
In each round, we are given an $\alpha$-fat object $O \subseteq (0, N)^d$. 
\begin{itemize}
    \item If $O$ is already hit by a point in $P'$, then we do nothing.
    \item Otherwise, we add all the points in $P\cap O$ that are of level $\ell(O)$ to $P'$.
\end{itemize}
     
It is clear that at the end of each turn, $P'$ is a hitting set of the objects given so far. Notice that in each step, the action made by the algorithm only depends on $P'$ and the current object, but not on the previous objects given by the adversary. 

Now we prove that the competitive ratio of this algorithm is $((4\alpha+1)^{2d}\log N)$. 

\begin{lemma}
    The in-width of an object $O \subseteq (0, N)^d$ is less than $2^{\ell(O)+1}$.  
     \label{lemma:geo_at_least_d}
\end{lemma}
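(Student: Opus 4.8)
The claim is that the in-width of $O$ is strictly less than $2^{\ell(O)+1}$, where $\ell(O)$ is the maximum level of a point of $P$ inside $O$. The plan is to argue by contradiction: suppose the in-width is at least $2^{\ell(O)+1}$. Then by definition of in-width there is an axis-parallel $d$-cube $Q \subseteq O$ of side length exactly $2^{\ell(O)+1}$ (if the largest cube has side $\geq 2^{\ell(O)+1}$, it contains a sub-cube of side exactly $2^{\ell(O)+1}$). I will show that such a cube $Q$ necessarily contains a point of $P$ of level at least $\ell(O)+1$, contradicting the maximality in the definition of $\ell(O)$.

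The key sub-step is a counting/pigeonhole argument on each coordinate axis. Fix a coordinate $i$, and let $(a_i, a_i + 2^{\ell(O)+1})$ be the projection of the open cube $Q$ onto the $i$-th axis. Among any $2^{\ell(O)+1}$ consecutive integers, and more carefully in any open interval of length $2^{\ell(O)+1}$ whose endpoints need not be integers, there is at least one integer $m$ that is a multiple of $2^{\ell(O)+1}$ — hence with $\ell(m) \geq \ell(O)+1$. (One must be slightly careful about the open interval and about whether the relevant multiple lands strictly inside; since $Q \subseteq O \subseteq (0,N)^d$ is open and has side exactly $2^{\ell(O)+1}$, the half-open interval $[a_i, a_i + 2^{\ell(O)+1})$ contains exactly one multiple of $2^{\ell(O)+1}$, and if that multiple equals $a_i$ itself one can instead use that $a_i + 2^{\ell(O)+1}$ is also such a multiple but is excluded; handling this boundary case is where a little care is needed, and it is why the bound is stated with a strict inequality.) Doing this simultaneously for all $d$ coordinates produces a point $x = (x_1,\dots,x_d)$ with every $x_i$ a multiple of $2^{\ell(O)+1}$, so $\ell(x) \geq \ell(O)+1$, and $x$ lies in $Q \subseteq O$, and $x \in (0,N)^d$ so $x \in P$. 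This contradicts the definition of $\ell(O)$ as the maximum level of a point of $P \cap O$.

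The main obstacle I anticipate is precisely the boundary/openness bookkeeping: because the objects are open sets and $Q$ is open, I need the witnessing integer multiple to land \emph{strictly} inside the projected interval rather than on its boundary. This is exactly what forces the "$<$" rather than "$\leq$" in the statement — if the in-width could equal $2^{\ell(O)+1}$, a cube of that width could be positioned with its corner at a high-level lattice point that sits on the boundary and hence is excluded from the open cube. So the proof should be organized as: (1) reduce to a cube $Q \subseteq O$ of side exactly $2^{\ell(O)+1}$ assuming the in-width is $\geq 2^{\ell(O)+1}$; (2) show each coordinate projection, being an open interval of length $2^{\ell(O)+1}$, contains an integer divisible by $2^{\ell(O)+1}$ in its interior — using that an open interval of length $s$ with at least one endpoint non-integral, or of length $>$ something, always contains such a multiple, and noting $2^{\ell(O)+1} \le N$ so the multiple is a genuine coordinate in $(0,N)$; (3) assemble the point $x \in P \cap O$ of level $\geq \ell(O)+1$ and derive the contradiction.
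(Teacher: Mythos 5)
Your proposal takes essentially the same route as the paper's proof: assume the in-width is at least $2^{\ell(O)+1}$, use each coordinate projection to locate a multiple of $2^{\ell(O)+1}$, and assemble from these a point of level at least $\ell(O)+1$ inside the cube (hence inside $O\cap P$), contradicting the maximality in the definition of $\ell(O)$. The one place where you diverge is the boundary case you flag, and there your parenthetical fix does not actually work: if the open cube of side exactly $2^{\ell(O)+1}$ is grid-aligned, \emph{both} candidate multiples $a_i$ and $a_i+2^{\ell(O)+1}$ lie on the excluded boundary, so no interior witness exists and no contradiction is reached; indeed, under the open-cube reading of ``in-width'' the strict inequality fails for an object such as $O=(2^{l+1},2\cdot 2^{l+1})^d$, which has level $l$ but in-width exactly $2^{l+1}$. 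You should not feel obliged to repair this, because the paper's own proof silently skips the same degenerate case (it asserts that the hyperplane $x_i=k_i2^{l+1}$ meets $S_O$, which fails for an open, exactly aligned $S_O$), and only the non-strict bound is used downstream (Corollary~\ref{corollary:out_width_level}, and Lemma~\ref{lemma:geo_at_most_d} has a further factor-$2$ slack); the clean way out is to prove the bound with $\le$, or to state the in-width via closed cubes, rather than to chase the strict inequality through the open-set bookkeeping.
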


See figure \ref{fig:lemma1} for an example. 

\begin{proof}[Proof of Lemma \ref{lemma:geo_at_least_d}.]
    Let $l=\ell(O)$. We define $S_O$ as the largest $d$-cube contained in $O$. 
    We assume for the sake of a contradiction that the width of $S_O$ is at least $2^{l+1}$ and we show that there exists a point $q\in S_O$ such that $\ell(q)\ge l+1$. 
    Since the width of $S_O$ is at least $2^{l+1}$, for each $i$, $1\le i\le d$, there exists an integer $k_i$, such that the hyperplane with $i$-th coordinate $k_i2^{l+1}$ intersects $S_O$. Then, the $q=(k_12^{l+1},\dots, k_d2^{l+1})$ is contained in $S_O$, and its level is $\ell(q)\ge l+1$. This proves the lemma.
\end{proof}

\begin{figure}[t]
    \centering
    \includegraphics[page=3, width=0.5\textwidth]{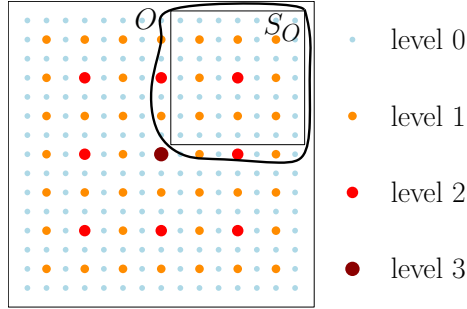}
    \caption{This figure shows an object $O$ of level $l=2$ in dimension two. $N=16$ and the in-width of $O$ is $7\le 2^{l+1}$.}
    \label{fig:lemma1}
\end{figure}

\begin{corollary}
    Let $O \subseteq (0, N)^d$ be an $\alpha$-fat object, for some $\alpha\ge1$. Then, the out-width of $O$ is at most $2^{\ell(O) + 1}\alpha$.
    \label{corollary:out_width_level}
\end{corollary}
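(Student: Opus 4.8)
The plan is to derive Corollary~\ref{corollary:out_width_level} directly from Lemma~\ref{lemma:geo_at_least_d} together with the definition of $\alpha$-fatness. Write $l=\ell(O)$. By Lemma~\ref{lemma:geo_at_least_d}, the in-width of $O$ is strictly less than $2^{l+1}$. Since $O$ is $\alpha$-fat, Definition~\ref{definition:fat} gives that the out-width of $O$ is at most $\alpha$ times its in-width, hence strictly less than $2^{l+1}\alpha$, which in particular is at most $2^{\ell(O)+1}\alpha$.

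In more detail, I would spell out the two inequalities in sequence: first $\text{in-width}(O) < 2^{l+1}$ from the lemma, then $\text{out-width}(O) \le \alpha\cdot\text{in-width}(O)$ from the fatness hypothesis, and compose them to conclude $\text{out-width}(O) < 2^{l+1}\alpha \le 2^{\ell(O)+1}\alpha$. One small thing to be careful about is that the lemma gives a strict inequality on the in-width while the corollary is stated with a non-strict inequality on the out-width; this is fine, since a strict upper bound implies the non-strict one, and there is no need to track the strictness further.

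There is essentially no hard part here: the corollary is an immediate consequence of the preceding lemma and the definition, and the proof is two lines. The only conceptual point worth noting is that the in-width/out-width language lines up exactly with how $\alpha$-fatness was defined (ratio of out-width over in-width at most $\alpha$), so the bound transfers without any loss beyond the factor $\alpha$.

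\begin{proof}[Proof of Corollary~\ref{corollary:out_width_level}.]
    Let $l=\ell(O)$. By Lemma~\ref{lemma:geo_at_least_d}, the in-width of $O$ is less than $2^{l+1}$. Since $O$ is $\alpha$-fat, by Definition~\ref{definition:fat} the ratio of its out-width over its in-width is at most $\alpha$, so the out-width of $O$ is at most $\alpha$ times its in-width, which is less than $2^{l+1}\alpha = 2^{\ell(O)+1}\alpha$.
\end{proof}
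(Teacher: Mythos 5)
Your proposal is correct and matches the intended (implicit) argument in the paper: the corollary follows immediately by combining Lemma~\ref{lemma:geo_at_least_d} with the definition of $\alpha$-fatness, exactly as you do. Your remark about strict versus non-strict inequality is fine and handled appropriately.
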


\begin{lemma}
    Let $l \in \calL$ and $\alpha \geq 1$. A $d$-cube of width at most $\alpha 2^{l+2}$ contains at most $(4\alpha + 1)^d$ points of level $l$. 
    \label{lemma:geo_at_most_d}
\end{lemma}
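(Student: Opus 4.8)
The plan is to ignore the fine structure of level-exactly-$l$ points and instead bound the (larger) set of points of level \emph{at least} $l$, since every point of level $l$ is in particular a point of level $\ge l$. By Definition~\ref{definition:fat}'s companion (the level of a point), a point $x=(x_1,\dots,x_d)\in\Z^d$ satisfies $\ell(x)\ge l$ if and only if every coordinate $x_i$ is divisible by $2^l$. Hence the candidate points form the scaled integer grid $2^l\Z^d$, and it suffices to count how many points of $2^l\Z^d$ lie inside an axis-parallel $d$-cube $C$ of width $w\le \alpha 2^{l+2}$.

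Since $C$ is a Cartesian product of $d$ intervals, each of length $w$, the count factorizes as the product over the $d$ coordinate directions of the number of multiples of $2^l$ contained in an interval of length $w$. The multiples of $2^l$ are equally spaced with gap $2^l$, so any real interval of length $w$ contains at most $w/2^l + 1$ of them. Substituting $w\le \alpha 2^{l+2}$ gives at most $4\alpha+1$ such multiples in each direction.

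Taking the product over the $d$ directions yields at most $(4\alpha+1)^d$ points of level at least $l$ in $C$, and therefore at most $(4\alpha+1)^d$ points of level exactly $l$, which is the statement. The only point requiring a little care is the elementary interval-counting estimate (and whether the interval is open or closed), but the bound $w/2^l+1$ holds uniformly in all cases, so no case analysis is needed; I do not expect any genuine obstacle in this lemma.
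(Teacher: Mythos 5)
Your proposal is correct and matches the paper's own argument: the paper likewise counts points of level at least $l$ (the grid $2^l\Z^d$) by bounding, coordinate by coordinate, the number of multiples of $2^l$ meeting the cube by $w/2^l+1\le 4\alpha+1$ and taking the product over the $d$ directions. No gaps.
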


See figure \ref{fig:lemma2} for an example.

\begin{proof}[Proof of Lemma \ref{lemma:geo_at_most_d}]
    Let $C \subseteq (0, N)^d$ be a $d$-cube of width $w \le \alpha 2^{l+2}$. 
    For each $i$, $1\le i\le d$, let $\Lambda_i$ be the set of integers $\lambda$ such that (i) $\lambda$ is a multiple of $2^l$ and, (ii) the hyperplane whose $i$-th coordinate is $\lambda$ intersects $C$. For each $i$, $1\le i\le d$, we have $|\Lambda_i|\le \frac{w}{2^l}+1\le  4\alpha+1$. 
    
    It is easy to see that $\prod_{i=1}^d \Lambda_i$ is the set of points of $C$ that are of level at least $l$. Thus, 
    $$n_l(C)\le n_{\ge l}(C)\le |\prod_{i=1}^d \Lambda_i|=\prod_{i=1}^d |\Lambda_i|\le (4\alpha+1)^d.$$
    This finishes the proof. 
    \end{proof}

\begin{figure}[t]
    \centering
    \includegraphics[page=4, width=0.5\textwidth]{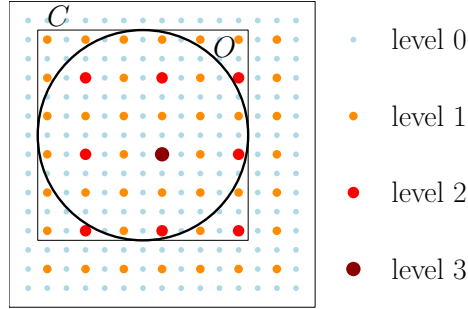}
    \caption{In this figure, $O$ is a two-dimensional disk. In particular, $O$ is $\alpha$-fat for $\alpha=\sqrt{2}$. Let $l=1$. The smallest $2$-cube containing $O$ is $C$, which has width less than $11\le \alpha 2^{l+2}\approx 11.31$. The number of points of level $l$ contained in $C$ is $27\le (4 \sqrt{2} + 1)^2\approx 44.31$. }
    \label{fig:lemma2}
\end{figure}

\begin{corollary}
    Let $O\subseteq (0,N)^d$ be an $\alpha$-fat object for some $\alpha\ge 1$. Then, $n_{\ell(O)}(O) \leq (4\alpha + 1)^d$.
    \label{corrolary:upper_bound_nl}
\end{corollary}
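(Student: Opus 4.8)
The plan is to obtain the bound as an immediate consequence of Corollary~\ref{corollary:out_width_level} and Lemma~\ref{lemma:geo_at_most_d}, with no new ideas required. Write $l=\ell(O)$. The quantity $n_{\ell(O)}(O)$ counts exactly the points of $P$ of level $l$ that lie in $O$, so it suffices to enclose $O$ in a $d$-cube to which Lemma~\ref{lemma:geo_at_most_d} applies.

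First I would apply Corollary~\ref{corollary:out_width_level}: since $O$ is $\alpha$-fat, its out-width is at most $2^{\ell(O)+1}\alpha = 2^{l+1}\alpha$. By the definition of out-width, this means there is an axis-parallel $d$-cube $C$ with $O\subseteq C$ whose width is at most $2^{l+1}\alpha$. Since $2^{l+1}\alpha\le \alpha 2^{l+2}$, the cube $C$ satisfies the hypothesis of Lemma~\ref{lemma:geo_at_most_d} for this same value of $l$. (If the minimal bounding cube of $O$ pokes outside $(0,N)^d$, one can either replace it by any translate of the same width that still contains $O\cap P$ inside $(0,N)^d$, or simply note that the counting argument in the proof of Lemma~\ref{lemma:geo_at_most_d} bounds the number of level-$l$ points of $P$ in \emph{any} $d$-cube of that width, independently of its position.)

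Then Lemma~\ref{lemma:geo_at_most_d} yields $n_l(C)\le (4\alpha+1)^d$. Finally, because $O\cap P\subseteq C\cap P$, every level-$l$ point of $O$ is a level-$l$ point of $C$, so $n_{\ell(O)}(O)=n_l(O)\le n_l(C)\le (4\alpha+1)^d$, which is the claimed inequality. I do not expect any genuine obstacle here; the only steps needing a moment's care are matching the width bound $2^{l+1}\alpha$ coming from the corollary to the threshold $\alpha 2^{l+2}$ in the lemma, and dispatching the boundary case just mentioned.
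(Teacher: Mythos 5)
Your argument is correct and is exactly the intended derivation: combine Corollary~\ref{corollary:out_width_level} (out-width at most $2^{\ell(O)+1}\alpha \le \alpha 2^{\ell(O)+2}$) with Lemma~\ref{lemma:geo_at_most_d} applied to the enclosing $d$-cube, and observe $O\cap P\subseteq C\cap P$. Your remark that the counting in Lemma~\ref{lemma:geo_at_most_d} is position-independent, so the bounding cube need not lie inside $(0,N)^d$, is a sensible extra precaution but not a departure from the paper's route.
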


We now analyze the competitive ratio of our algorithm and show the following bound.

\begin{lemma}
Our algorithm is $((4\alpha+1)^{2d} \log N)$-competitive. 
\label{lemma:main}
\end{lemma}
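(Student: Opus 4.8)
The plan is a charging argument. Fix an optimal offline hitting set $Q$ for the sequence of objects given by the adversary, so $|Q|=\opt$. I will distribute the points that the algorithm places in $P'$ among the pairs $(p,l)$ with $p\in Q$ and $l\in\calL$, and show that each such pair is charged at most $(4\alpha+1)^d$ points; since $|Q\times\calL|\le \opt\cdot\log N$, this yields $|P'|\le (4\alpha+1)^{2d}(\log N)\,\opt$.

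First I would isolate the objects that actually cause an insertion: these are exactly the objects that are \emph{unhit} at the moment they arrive; call them the \emph{active} objects $O_1,\dots,O_t$ in order of arrival. When the algorithm processes $O_j$ it adds precisely the points of level $\ell(O_j)$ in $O_j\cap P$, and by Corollary~\ref{corrolary:upper_bound_nl} there are at most $(4\alpha+1)^d$ of them. Hence $|P'|\le (4\alpha+1)^d\, t$, and it suffices to show $t\le (4\alpha+1)^d(\log N)\,\opt$.

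For each active object $O_j$ pick a witness $q_j\in Q\cap O_j$ (which exists because $Q$ is a hitting set) and charge $O_j$ to the pair $(q_j,\ell(O_j))\in Q\times\calL$. The heart of the proof is the claim that every pair $(p,l)$ is charged at most $(4\alpha+1)^d$ active objects. To prove it, let $O_{j_1},\dots,O_{j_k}$ with $j_1<\dots<j_k$ be the active objects charged to $(p,l)$; each has level $l$ and contains $p$. By Corollary~\ref{corollary:out_width_level}, each $O_{j_i}$ lies inside a $d$-cube of width at most $\alpha 2^{l+1}$, and since that cube also contains $p$, it lies inside the $d$-cube $B$ centered at $p$ of width $\alpha 2^{l+2}$. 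Thus all $O_{j_i}\subseteq B$, and by Lemma~\ref{lemma:geo_at_most_d} the cube $B$ contains at most $(4\alpha+1)^d$ points of level $l$. On the other hand, at the time $O_{j_i}$ arrives it is unhit, so no point of $P'$ lies in $O_{j_i}$; in particular every level-$l$ point of $O_{j_i}\cap P$ (there is at least one, since $\ell(O_{j_i})=l$) is not yet in $P'$, and the algorithm adds all of them. So each $O_{j_i}$ forces at least one fresh level-$l$ point of $B$ into $P'$, whence $k\le (4\alpha+1)^d$.

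Combining, $t\le |Q|\cdot|\calL|\cdot(4\alpha+1)^d=\opt\cdot\lfloor\log N\rfloor\cdot(4\alpha+1)^d\le \opt\cdot(\log N)\cdot(4\alpha+1)^d$, and therefore $|P'|\le (4\alpha+1)^{2d}(\log N)\,\opt$, proving the competitive ratio. The step that needs the most care is the geometric packing: confirming that all level-$l$ active objects through a common point $p$ really do fit inside a single $d$-cube of width $\alpha 2^{l+2}$, and that "unhit at arrival'' genuinely forces a previously unused level-$l$ point. Both follow directly from Corollary~\ref{corollary:out_width_level}, Lemma~\ref{lemma:geo_at_most_d}, and the definition of an object's level, so no new idea beyond the charging scheme is required; everything else is bookkeeping.
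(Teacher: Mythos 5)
Your proof is correct and follows essentially the same route as the paper: bound the algorithm's additions per unhit object by Corollary~\ref{corrolary:upper_bound_nl}, then bound the number of unhit level-$l$ objects through each optimal point $p$ by packing them into the cube of width $\alpha 2^{l+2}$ centered at $p$ via Corollary~\ref{corollary:out_width_level} and Lemma~\ref{lemma:geo_at_most_d}. The only cosmetic difference is that you count ``fresh'' level-$l$ points consumed by each unhit object, whereas the paper phrases the same bound as a pigeonhole contradiction (two such objects sharing a level-$l$ point would mean the later one was already hit); the two arguments are equivalent.
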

\begin{proof}

Let $\calS\subseteq\calO$ denote the sequence of objects given by the adversary and $\calS'\subseteq \calS
$ denotes the sub-sequence of objects that are not already hit at their arrival. For each $O\in\calS$, our algorithm picks the points of level $\ell(O)$ inside $O$ if $O$ is not already hit, i.e., if $O\in\calS'$. By corollary \ref{corrolary:upper_bound_nl}, we know that $O$ contains at most $(4\alpha + 1)^d$ points of level $l$, and thus, our algorithm returns a hitting set of size 
\begin{equation}
    |P'| \leq |\calS'|(4\alpha + 1)^d
    \label{eq:bound_alg}
\end{equation}
We now establish an upper bound on the size of the minimum hitting set $\textsc{OPT}\subseteq P$ of $\calS$. 

For each $l\in \calL$ and each $p\in P$, we denote $\calS'_{l,p}\subseteq \calS'$ the set of objects of level $l$ in $\calS'$ that contain $p$.

    \begin{claim}
        $|\calS_{l, p}'| \leq (4\alpha + 1)^d$.
    \end{claim}
        Assume for the sake of a contradiction, that $|\calS_{l, p}'| > (4\alpha + 1)^d$. We denote $B_p$ as the $d$-cube of width $2^{l+2}\alpha$ centered in $p$. By Lemma \ref{corrolary:upper_bound_nl}, we know that there are at most $(4\alpha + 1)^d$ points of level $l$ in $B_p$. Also, it is clear by Corollary \ref{corollary:out_width_level} that any object of level $l$ containing $p$ is inside $B_p$. Then, by the pigeonhole principle, there are two objects $O, O' \in \calS_{l, p}'$ such that they both contain the same point $q$ of level $l$. See figure \ref{fig:upper-bound}. Without loss of generality, let us assume that $O$ arrived before $O'$ in the sequence of objects given by the adversary. Then, our algorithm picks $q$ when $O$ arrives since both $O$ and $q$ have level $l$. On the other side, we know that $q \in O'$, meaning that $O'$ is already hit when it is given by the adversary, which is a contradiction with the fact that $O'$ is in $\calS'$. Therefore, for each $l\in \calL$ and each $p\in P$, we have $|\calS_{l, p}'| \leq (4\alpha + 1)^d$.

\begin{figure}[t]
    \centering
    \includegraphics[page=5, width=0.5\textwidth]{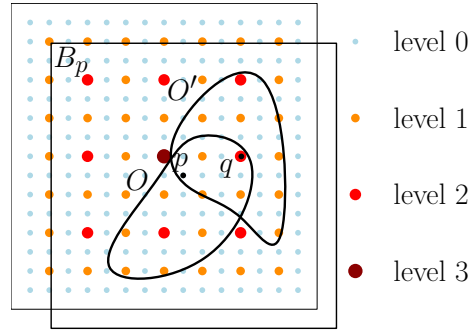}
    \caption{Here, $l=2$. We show in the proof of Lemma \ref{lemma:main} that the adversary gave two objects $O$ and $O'$ of level $l=2$, containing a common point $p$, that share a common point $q$ of level two. This cannot happen if $O$ and $O'$ are both not already hit at their arrival time.
    }
    \label{fig:upper-bound}
\end{figure}

     Since $\textsc{OPT}$ is a hitting set for $\calS$, it is also an hitting set for $\calS'$, which implies that 
     $$
     \calS'\subseteq \bigcup_{p\in\textsc{OPT}}\bigcup_{l\in\calL}\calS'_{l,p}.
     $$
     With the previous upper bound on the size of $\calS'_{l,p}$, we obtain that
     $$
     |\calS'|\le |\textsc{OPT}|\cdot|\calL|\cdot(4\alpha + 1)^d\le |\textsc{OPT}|\cdot(\lfloor\log N\rfloor)\cdot(4\alpha + 1)^d,
     $$
     which together with the bound of equation \eqref{eq:bound_alg} implies that our algorithm is $((4\alpha+1)^{2d} \log N)$-competitive. 
\end{proof}

\section{Lower Bound}
In this section, we prove the lower bound for the problem (Theorem \ref{theorem:lower_bound}). For any object $O \subset \R^d$, we say that object $O'\subset \R^d$ is a \emph{dilation} of $O$ if it is the result of a translation and a homothety of $O$ with positive scale factor. More formally, $O'$ is a dilation of $O$ if there exists $\beta > 0$ and translation vector $v$ such that $O' = \beta O + v$. Let $\calD(O)$  denote the set of dilations of $O$. 
Notice that if $O$ is $\alpha$-fat, for some $\alpha\ge1$, then a dilation of $O$ is also $\alpha$-fat. 

Recall that the online hitting set problem can be formalized as a game between an adversary and an algorithm.  

\begin{theorem}
    Consider the range space $(P,\calD(O))$, for any $\alpha$-fat object $O\subset \R^d$. The adversary 
has a strategy that forces the algorithm to place at least $\frac{\log N}{1 + \log \alpha}$ points, whereas the optimum offline solution only requires one point. 
\label{theorem:lower_bound}
\end{theorem}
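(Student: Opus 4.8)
The plan is to have the adversary play an adaptive sequence of \emph{nested} dilations of $O$: each new object will avoid all points the algorithm has bought so far — which forces it to buy a fresh point every round — while all the objects keep a common point of $P$ in their intersection, so the offline optimum stays equal to $1$.

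First I would fix, by scaling, an \emph{in-cube} $C\subseteq O$ of width $1$ and an \emph{out-cube} $B\supseteq O$ of width $a\le\alpha$; then every dilation $O'\in\calD(O)$ inherits an in-cube $C'$ and an out-cube $B'$ with $B'\supseteq O'\supseteq C'$ and $\mathrm{width}(B')/\mathrm{width}(C')\le\alpha$, and in particular $O'\subseteq C''$ whenever $B'$ is contained in a cube $C''$. The adversary maintains a dilation $O_i$ with in-cube $C_i$ of width $a_i$; it starts in round $1$ with $O_1$ a dilation of $O$ whose out-cube is as large as possible inside $(0,N)^d$, so $a_1\ge N/\alpha$. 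In round $i$ it presents $O_i$, which by construction avoids $q_1,\dots,q_{i-1}$ (the points bought in the previous rounds), so $O_i$ is not yet hit and the algorithm must buy a point $q_i\in O_i$. To build $O_{i+1}$, I would split $C_i$ (width $a_i$) by the hyperplane through $q_i$ orthogonal to the first axis, keep the larger of the two open parts (its first-coordinate extent is $\ge a_i/2$), fit inside it an axis-parallel cube $K_i$ of width $a_i/2$ — which automatically avoids $q_i$, and can be placed to avoid $q_1,\dots,q_{i-1}$ as well since those lie outside $C_i$ — and let $O_{i+1}$ be the dilation of $O$ whose out-cube is $K_i$. Then $O_{i+1}\subseteq K_i\subseteq C_i\subseteq O_i$, $O_{i+1}$ avoids $q_1,\dots,q_i$, and $a_{i+1}\ge (a_i/2)/\alpha=a_i/(2\alpha)$.

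The construction is internally consistent, so all the invariants used above hold, and the algorithm is forced to buy one new point ($q_i$, distinct from all earlier ones) per round. I would keep playing as long as $a_i$ stays above a suitable constant $\Theta(\alpha)$; since $a_1\ge N/\alpha$ and $a_{i+1}\ge a_i/(2\alpha)$, this lasts for $R=\Omega(\log_{2\alpha}N)=\Omega\!\big(\tfrac{\log N}{1+\log\alpha}\big)$ rounds, so the algorithm pays at least $R$. For the offline optimum: the extra dilation $O_{R+1}$ that we would build after the last round is nested inside $O_1,\dots,O_R$ and avoids $q_1,\dots,q_R$, and by the stopping rule its in-cube still contains a point $p^\star\in P$; hence $\{p^\star\}$ hits every presented object and the offline optimum equals $1$, which gives the claimed lower bound on the competitive ratio.

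The hard parts are the two places this outline glosses over. (i) In a round where the algorithm buys several points at once, the single hyperplane split must be replaced by a packing estimate: an out-cube of width $w'$ can be placed inside an in-cube of width $w$ avoiding a finite set $F$ of points as soon as $(w-w')^d>|F|\cdot(w')^d$, which yields a per-round shrink factor $\alpha(1+|F|^{1/d})$; using $\log\!\big(\alpha(1+m^{1/d})\big)\le(1+\log\alpha)\,m$ for every integer $m\ge1$ then turns the product of the shrink factors into the bound $\sum_i\log\!\big(\alpha(1+m_i^{1/d})\big)\le(1+\log\alpha)\sum_i m_i$ on the total number $\sum_i m_i$ of bought points, recovering the same lower bound. (ii) The floor/ceiling bookkeeping must guarantee that every in-cube contains enough integer points of $P$ for the recursion to continue and for the final $p^\star$ to exist; this is cleanest if one keeps all the in-cubes dyadically aligned and tracks the size of the integer grid they contain.
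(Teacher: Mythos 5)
Your proposal is correct and follows essentially the same route as the paper: an adaptive adversary that repeatedly presents the largest possible dilation of $O$ nested inside the previous one while avoiding every point the algorithm has bought, a per-round bound on how much the object must shrink in terms of the number of points bought that round, and the observation that the last (unpresented) nested object still contains a grid point, so one point of $P$ hits the whole sequence and $\textsc{OPT}=1$. The only substantive difference is how rounds with several bought points are handled: the paper slices the in-cube of $O_j$ into $k_j+1$ slabs per coordinate, picks an empty slab in each coordinate, and takes their product, getting an empty sub-cube of width $w_j'/(k_j+1)$ and the recursion $w_{j+1}\ge w_j/(\alpha(k_j+1))$, aggregated via $1+x\le e^x$; you instead use a volume-packing estimate yielding the (per-round slightly sharper) shrink factor $\alpha(1+m^{1/d})$ and aggregate with $\log\bigl(\alpha(1+m^{1/d})\bigr)\le(1+\log\alpha)m$. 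Both aggregations give the same asymptotic bound; your version loses small additive constants (you start from the in-cube, so $a_1\ge N/\alpha$, and stop at width $\Theta(\alpha)$ rather than $1$), so you get $\frac{\log N}{1+\log\alpha}-O(1)$ rather than the exact constant in the theorem statement, which is immaterial for the $\Omega\bigl(\frac{\log N}{1+\log\alpha}\bigr)$ conclusion. One technical nitpick: with $w'=w/(1+|F|^{1/d})$ your packing inequality holds only with equality, so take $w'$ marginally smaller (or note that the blocked regions cannot tile the position cube in general); this does not affect the bound.
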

\begin{proof}

The adversary produces a sequence $O_1, O_2, \dots, O_s \in \calD(O)$ as follows. In the first step, the adversary chooses $O_1\in\calD(O)$ such that the smallest $d$-cube enclosing $O_1$ is $(0,N)^d$. Then, for each $j \geq 1$, the adversary does the following. 
At step $j$, with $j\ge 2$, let $O_{j}$ be the largest dilation of $O$ that is contained in $O_{j-1}$ and that does not contain any point from the algorithm. 
If $O_{j}$ does not contain any point in $P$, the game ends. Otherwise, the adversary gives $O_j$ to the algorithm; See Figure \ref{fig:adversary_strategy}. 

Let $O_1, O_2, \dots, O_s$ be the sequence of objects obtained at the end of the game. Since $O_1\supseteq O_2\supseteq  \dots\supseteq  O_s$, and there exists a point $p\in P\cap O_s$, the set $\{p\}$ is a hitting a hitting set of $\{O_1, O_2, \dots, O_s\}$. 

Now we give a lower bound on the number of points used by the algorithm. 
For each $j$, $1\le j\le s$, let $k_j$ be the number of points added by the algorithm during step $j$. It is clear that the algorithm uses in total $\sum_{j=1}^sk_j$ points. We show that $\sum_{j=1}^sk_j\ge \frac{\log N}{1+\log \alpha}$. 

For each $j$, $1\le j\le s$, let $C_j$ ($C'_j$) be the smallest (largest, resp.) $d$-cube containing (contained in, resp.) $O_j$ and let $w_j$ ($w'_j$, resp.) denote its width. Notice that $w_j$ ($w'_j$) is the out-width (in-width, resp.) of $O_j$. See Figure \ref{fig:lower_bound}. We claim that $w_{j+1} \geq \frac{w_j}{\alpha(k_j+1)}$.

Let $P_j$ be the set of points from the algorithm that are contained in $C_{j}'$ at the end of step $j$. We have $|P_j|\le k_j$ since $O_j$ is not already hit at the beginning of step $j$, and the algorithm adds $k_j$ new points during that step. 
We prove that there exists a $d$-cube of width at least $\lfloor\frac{w_j'}{k_j+1}\rfloor$ inside $C_j'$ which does not contain any points of $P_j$. 
Let $p=(x_1,\dots,x_d)$ be a corner of $C_j'$ such that for any other point $p'=(x_1', \dots, x_d') \in C_j'$ and any $1\le i\le d$, we have $x_i \le x_i'$. For each $i$, $1 \le i \le d$, consider the set of $k_j+1$ intervals $\{ [x_i + h \frac{w_j'}{k_j+1} , x_i + (h+1) \frac{w_j'}{k_j+1}] \mid 0 \le h \le k_j, h\in\Z\}$. Since $|P_j| \le k_j$, there exists an integer $0 \le h_i \le k_j$ such that the $i$-th coordinate of every point in $P_j$ is not in $(x_i + h_i  \frac{w_j'}{k_j+1} , x_i + (h_i+1)  \frac{w_j'}{k_j+1} )$. 
Therefore, if we consider the $d$-cube $S_j=\prod_{i=1}^d [x_i + h_i  \frac{w_j'}{k_j+1} , x_i + (h_i+1)  \frac{w_j'}{k_j+1} ]$, $S_j$ does not contain any points of $P_j$, and the width of $S_j$ is $\frac{w_j'}{k_j+1}$ which implies that $w_{j+1}$ would be at least $\frac{w_j'}{k_j+1}$ (See Figure \ref{fig:lower_bound}). Also, note that $S_j$ is inside $C_j'$. Moreover, since $O_i$ is $\alpha$-fat, we have $w_{j+1} \ge \frac{w_j'}{k_j+1} \ge \frac{w_j}{\alpha(k_j+1)}$.

We conclude that $w_{s+1} \ge \frac{w_1}{\alpha^s\prod_{j=1}^s (k_j+1)} = \frac{N}{\alpha^s\prod_{j=1}^s (k_j+1)}$. Also, it is obvious that the adversary stops when $w_{s+1} \leq 1$. Hence $\frac{N}{\alpha^s\prod_{j=1}^s (k_j+1)} \le w_{s+1} \le 1$ and since $(1+x) \le e^x$ for all $x \in \R$, it holds that 
$$N \le \alpha^s\prod_{j=1}^s e^{k_j} = e^{s\log\alpha + \sum_{j=1}^s k_j}.$$ 
Now, we apply $\log$ to this equation, and we obtain $\log N \le s\log\alpha + \sum_{j=1}^s k_j$. Also, recall that for each $j$, $1 \le j \le s$, $O_j$ is not hit at the time of its arrival so we have $k_j \ge 1$, which implies $s \le \sum_{j=1}^s k_j$. Therefore, we obtain $\frac{\log N}{1 + \log\alpha} \le \sum_{j=1}^s k_j$ which finishes the proof.
\end{proof}

\begin{figure}[h!]
    \centering
    \includegraphics[page=6, width=0.5\textwidth]{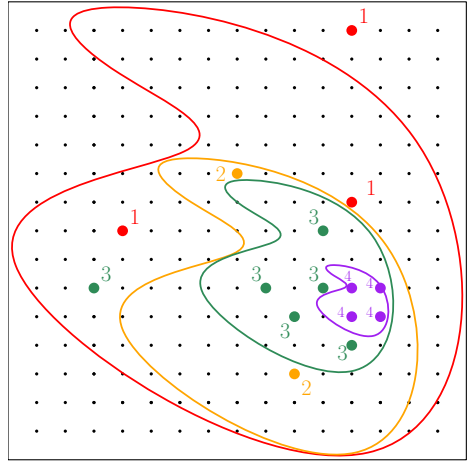}
    \caption{This figure illustrates the strategy of the adversary. In each turn, the adversary finds the biggest dilation of $O$ that is not hit by any point inside the previous object. The points with the same number are the set of points chosen by the algorithms in each step.}
    \label{fig:adversary_strategy}
\end{figure}

\begin{figure}[h!]
    \centering
    \includegraphics[page=7, width=0.5\textwidth]{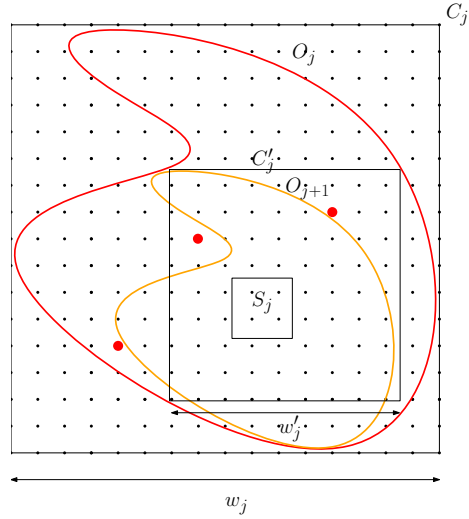}
    \caption{This figure shows the existence of a $d$-cube of width at least $\frac{w_j}{\alpha(k_j+1)}$ at the end of $j$-th step of the game that does not contain any point of the algorithm. Red points are the points chosen by the algorithm in the $j$-th step.}
    \label{fig:lower_bound}
\end{figure}

\section{Conclusion}

We have presented a tight $O(\log n)$-competitive algorithm for the online hitting set problem of fat objects of fixed dimension and fixed aspect ratio, when the set of points corresponds to all the points of integral coordinates contained in a fixed hypercube of width $N=n^{1/d}$. We finish with some open questions. 
\begin{itemize}
    \item For any $d$, when the algorithm is allowed to use any point in $\Z^d \cap (0,N)^d$, can one either design an online algorithm for $\alpha$-\emph{fat} objects with a better competitive ratio than $((4\alpha+1)^{2d}\log N)$ or improve the lower bound of $\frac{\log N}{1 + \log \alpha}$ to tighten the gap?
    \item When $d=2$, and the algorithm is allowed to use any point in a fixed square of width $N$, can one design an online algorithm for \emph{rectangles} with competitive ratio $O(\log N)$? 
    \item When $d=2$, and the set of points is a fixed subset (known in advance)
    $P\subseteq \Z^2\cap (0,N)^2$, can one design an online algorithm for \emph{disks} with competitive ratio $O(\log N)$? 
     \item When $d=3$, and the set of points is a fixed subset (known in advance) 
    $P\subseteq \Z^3\cap (0,N)^3$, can one design an online algorithm for $3$-\emph{cubes} with competitive ratio $O(\log N)$? 
    \item When $d=2$, and the set of points is a fixed subset (known in advance) 
    $P\subseteq \R^2$ of size $n$, can one design an online algorithm for \emph{squares} with competitive ratio $O(\log n)$? 
\end{itemize}

 \bibliographystyle{splncs04}
 \bibliography{bibliography}

\end{document}